\journal{axiv.org}
\newcommand{\Zed}{\mathbb Z}
\newcommand{\Real}{\mathbb R}
\newtheorem{theorem}{Theorem}
\newtheorem{lemma}{Lemma}
\newtheorem{corollary}[theorem]{Corollary}
\newdefinition{remark}{Remark}
\newdefinition{example}{Example}
\newdefinition{definition}{Definition}
\begin{document}

\begin{frontmatter}

%% Title, authors and addresses

%% use the tnoteref command within \title for footnotes;
%% use the tnotetext command for the associated footnote;
%% use the fnref command within \author or \address for footnotes;
%% use the fntext command for the associated footnote;
%% use the corref command within \author for corresponding author footnotes;
%% use the cortext command for the associated footnote;
%% use the ead command for the email address,
%% and the form \ead[url] for the home page:
%%
%% \title{Title\tnoteref{label1}}
%% \tnotetext[label1]{}
%% \author{Name\corref{cor1}\fnref{label2}}
%% \ead{email address}
%% \ead[url]{home page}
%% \fntext[label2]{}
%% \cortext[cor1]{}
%% \address{Address\fnref{label3}}
%% \fntext[label3]{}

\title{A measure of state transition of collective of stateless automata in discrete environment}

\author{Igor Potapov}
\ead{potapov@liverpool.ac.uk}
\address{Department of Computer Science, University of Liverpool, Ashton Building, Ashton St, Liverpool L69 7ZF, U.K.}

\author{Oleksiy Kurganskyy\corref{cor1}}
\ead{kurgansk@gmx.de}
\address{Institute of Applied Mathematics and Mechanics, Ukrainian National Academy of Sciences, 74 R. Luxemburg St, Donetsk, Ukraine}
\cortext[cor1]{Corresponding author}

\begin{abstract}
%% Text of abstract
This is an introductory paper in which we rise and
study two fundamental problems related to the 
analysis of a computational dynamic object distributed on 
the environment:
\begin{itemize} 
\item How to define unambiguously what is the state of such object? 
\item How to measure the amount of state transitions in this case?
\end{itemize} 
The main idea of the paper is to show that
the state of such computational dynamic object distributed on 
the environment can be described
by the language of internal and external states.
The results based on proposed approch 
have something in common with special relativity theory
and suggest existence of further connections between the automata theory and relativity theory.
\end{abstract}

\begin{keyword}
%% keywords here, in the form: keyword \sep keyword
Collective of automata \sep cellular automata \sep finite automata theory \sep special relativity theory
%% PACS codes here, in the form: \PACS code \sep code

%% MSC codes here, in the form: \MSC code \sep code
%% or \MSC[2008] code \sep code (2000 is the default)

\end{keyword}

\end{frontmatter}

%%
%% Start line numbering here if you want
%%
% \linenumbers

%% main text
\section{Introduction}

In this work a collective of interacting automata in an one-dimensional geometric 
environment is considered as an integral automata-like computational object.
The problem to define unambiguously what is the state of such dispersed and moving on the 
environment object and how to measure the amount of state transitions in this case
is quite non-trivial.
As opposed to the finite state automata where the measure of state transition 
is one state per unit of discrete time, for a computational dynamic object distributed 
on the environment different approaches to definition of the measure of state 
transition are possible. 

In this paper we propose a way for defining what a state is in the context of collective of
stateless automata. It allows us to define it on the basis of the relative positioning of 
automata, i.e. on the basis of its geometry.
The proposed approach distinguishes two types of states: internal and 
external states of automata collective. The measure of state transition of collective of automata, introduced 
in this paper, we name the proper time of this collective.

The proposed research is inspired by three major research directions which are: 
1) collective of automata in finite automata theory; 
2) discrete models of physical processes and projecting physical world into informational space of symbols and languages for computer modelling of physical world; 
3) studying the notion of time. 
Each of these research directions has an extensive bibliography confirming their importance~\cite{5,6,8,9,10}. 
The basis for this research is the notion of relativity as given by Poincare in his popular works~\cite{1,2,3,4}.

The concluding comparison of the obtained results with some formulas of special relativity theory 
shows that the formulated principles are invariant in relation to linguistic means of expression: semantic affinity of the principles (e.g., coordinate, velocity, reference frame) that form the language of our discrete model to the principles of language of special relativity theory resulted in their syntactic affinity (e.g., velocity-addition formula, ``length contraction/extension'' formula). The way of forming the language (velocity, time, reference frame) of interaction and interpreting this interaction between automata collectives in our model reflects Poincare's conventional point of view toward laws of physics. 

In order to suggest a physical analogy for this model we use the word ``body'' as alias for ``collective of automata''.

The paper is organized as follows. In the Section~\ref{body} we define the model of collective of automata. Then in the Section~\ref{state} we derive the notions of external and internal states of a collective of automata and study their properties connecting such notions as coordinate, spacial velocity, proper time velocity and proper time of collective.

\section{Body as a collective of automata}\label{body}

In what follows we use denotations $\Zed$ and $\Real$ 
for the sets of integers and real numbers, respectively. 
Also we denote the domains for the time and space coordinates by $T$ and $X$.
Initially in the model defintion we assume that $T$ and $X$ coincide with $\Zed$ 
but then we will extend it to $\Real$.

The general framework of the model, that we use for this study, 
consists of two main components: an environment $G$ that is represented by a graph and a set of stateless automata, which are interacting with the environment and between themselves.

The environment $G$ is defined as the infinite directed graph with the set of nodes $V=\{x+\frac{1}{2} | x \in \Zed \}$ and the set of edges $E= \{ (x-\frac{i}{2},x+\frac{i}{2}) | x \in \Zed, i \in \{-1,1\} \}$. An edge $(x-\frac{i}{2},x+\frac{i}{2})$ for some $i \in \{-1,1\}$ has the absolute coordinate $x\in \Zed$ and the direction $i$. Absolute coordinate of an edge $e$ will be denoted by $x(e)$ and its direction by  $r(e)$. Also the edge $e$ will be denoted by $x(e)^{r(e)}$. 
By the neighborhood of an edge $x^i$ we understand the pair of edges $x^i$ and $(x+i)^{-i}$. The edges $x^i$ and $(x+i)^{-i}$ will be called opposite edges and $x^i$ and $x^{-i}$ will be called contrary edges.

Stateless automata on the environment we name as elementary bodies. 
In the general framework we assume that elementary bodies are coloured in a way that isomorphic 
automata will have the same colour and non-isomorphic automata will have different colours. We assume that $r$ different numbered from $1$ to $r$ colours are used. Every moment of time $t$ any elementary body $b$ is located on an edge $b(t)$ of the graph $G$. 

The input for an elementary body, located on an edge $x^i$, is the sequence $(p_1,p_2,\ldots,p_r,q_1,q_2,\ldots,q_r)$ called the neighbourhood state of the edge $x^i$, where $p_k$ and $q_k$ are the numbers of elementary bodies of the colour $k$, 
located on the edges $x^i$ and $(x+i)^{-i}$ at the same moment of time, respectively.
The output of an elementary body is one of the two motions either the straight-line motion or the turn. 
If the output of an elementary body $b$ at a time moment $t\in\Zed$ on an edge $b(t)=x^i$ is the straight-line motion, then at the next time moment $b(t+1)=(x+i)^i$ and we say that it does not change its external state. If the output is the turn then $b(t+1)=x^{-i}$ and we say that the elementary body changes its external state.
Denoting by $\tau_b(t)$ the number of external state changes of $b$ until the moment of time $t$ we have that $1=\tau_b(t+1)-\tau_b(t)+\left|x(b(t+1))-x(b(t))\right|$ and also $t=\tau_b(t)-\tau_b(0)+s_b(t)$, where $s_b(t)= \sum^{t}_{j=1}|x(b(j))-x(b(j-1))|$ is the path covered by $b$ during the period of time $t$. 
In other words any elementary body uses the absolute time unit 
either for one spatial coordinate change in the environment or for one external state transition. We call $\tau = \tau_b(t)$ the proper time of $b$ and $w_b(t)=\tau_b(t+1)-\tau_b(t)$ the proper time velocity of $b$. 
Let us denote by $x_b(t)=x(b(t))$. We call $x_b(t)$ the absolute coordinate of $b$ at the moment of time $t$.
We denote by $v_b(t)=x_b(t+1)-x_b(t)$ the absolute spatial velocity of $b$ at the moment of time $t$. We call it uniform spatial velocity if $v_b(t)$ is a constant. For example, it follows from above definitions that any elementary body can have only one of the following uniform spatial velocities: $v=1$, $v=-1$, $v=0$.

The only state of a stateless automaton (i.e. of elementary body) we call the internal state. An elementary body is unambiguously defined by the set of input symbols that change its external state. In additional we assume also that elementary body can not change its external state anyway if its opposite edge is empty. 

%This property we name {\sl relativity property}. 

We call the pair of a space coordinate $x$ and a time coordinate $t$ as coordinate in the 
absolute reference frame $O=X \times T $ and denote by the column vector. We call $O$ also the event space.
We define the discrete world line of $b$ in the event space from a time $t'$ to $t''$ as 
$b(t':t'')=\left\{\left( \begin{array}{c}x_b(t)\\t\end{array} \right)| t' \le t \le t'', t \in \Zed \right\}$, where $t',t''\in\Zed$, $t'\le t''$.

\begin{definition}
A body is an arbitrary finite set of elementary bodies.
\end{definition}

According to the defintion different bodies may have common parts and one body can contain another body as a subset.
If an elementary body belongs to a body then we will look at it as an elementary part of this body. An elementary body can be an elementary part of different bodies simultaneously.

%The set of all elementary bodies on the environment will be called the {\sl free body}.
%The free body is the only exception when we use the notion of a body for a possible infinite set of elementary bodies.

The following two examples illustrate some of introduced definitions. 
Any elementary body in both examples changes its external state if and only if its opposite edge is not empty.
From it follows that all elemantary bodies are isomorphic.
We assume that all elementary bodies in each example are enumerated by integer numbers. 

\begin{example}\label{example1}
At time $t=0$ for each $x \in \Zed$ the elementary body with the number $x$ is located on the edge $x^{+1}$ if $x$ is even number and $x^{-1}$ otherwise. We define the body $A_1$ as the set $A_1=\{0,1,2\}$ of elementary bodies $0$, $1$ and $2$. 
\begin{figure}
\begin{center}
\includegraphics[scale=0.4]{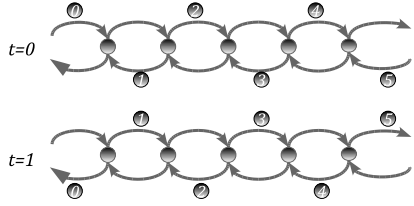}
\caption{Dynamics of elementary bodies from the Example~\ref{example1}}
\end{center}
\end{figure}
\end{example}

\begin{example}\label{example2}
At time $t=0$ for each $x \in \Zed$
the elementary body with the number $x$ has the coordinate $4 \left\lfloor \frac{x}{3} \right\rfloor+ (x \mod 3)$ and located on the edge with the direction $-1$ if $x \equiv 1 \mod 3$ and on the edge with the direction $+1$ otherwise. In this example we define the body $A_2=\{0,1,2\}$.
\end{example}

Let a body $B$ consist of $n$ elementary bodies enumerated by numbers 
$\{1,2, \ldots, n\}$. Then the absolute (average) coordinate of the body $B$ at time $t$ is the value
$x_B(t)=\frac{x_1(t)+ \ldots + x_n(t)}{n}$ and absolute spatial velocity of the body $B$ at time $t$ is the value 
$v_B(t)=x_B(t+1)-x_B(t)$. The bodies $A_1$ and $A_2$ from the above examples have 
uniform spatial velocities $0$ and $\frac{1}{3}$, respectively. From the definitions it follows that the
maximal possible positive or negative spatial velocities of any body can be $1$ or $-1$.

Since the coordinate values of a body can be non-integers let us
extend the absolute reference frame $O$ from $\Zed \times \Zed$ to $\Real \times \Real$.  
Let $t \in \Zed$ and $-\frac{1}{2}< \Delta \leq \frac{1}{2}$ then we say that an elementary body $b$  
at time $t+\Delta$ has the coordinate $x_b(t+\Delta)=x_b(t)+r(b(t))\cdot\Delta$ and is located on the edge 
$b(t+ \Delta)=b(t)$. 

%
%Since the coordinate values of a body can be non-integers let us
%extend the absolute reference frame $O$ from $\Zed \times \Zed$ to $\Real \times \Real$.  
%Let $t\in\Zed$ and $-\frac{1}{2}<\Delta\leq\frac{1}{2}$ then we say that an elementary body $b$ at time $t$ has the spacial velocity $v_b(t)=r(b(t))$ and 
%at time $t+\Delta$ has the coordinate $x_b(t+\Delta)=x_b(t)+v_b(t) \cdot \Delta$, the spacial velocity $v_b(t+\Delta)=v_b(t)$ and is located on the edge 
%$b(t+\Delta)=b(t)$. 

Now we can define the (continues) world line $b(t':t'')$ of an elementary body $b$ in time interval from $t'$ to $t''$ as the extension of its discrete positions in the event space: $b(t':t'')=\left\{\left( \begin{array}{c}
x_b(t)\\t\end{array} \right)\left| t' \le t \le t'', t \in \Real\right. \right\}$, where $t',t''\in\Real$, $t'\le t''$. If $b(t':t'')$ is a straight line segment in the event space then the vector 
$\left( \begin{array}{c}
x_b(t'')-x_b(t') \\
t''-t'
\end{array} \right)$ can have only the direction either of the vector 
$\left( \begin{array}{c}
1\\
1
\end{array} \right)$ or of the vector $\left( \begin{array}{c}
-1\\
1
\end{array} \right)$ and we say that the $b(t':t'')$ corresponds to {\sl an elementary move} of $b$.

%The world line of $b$ can be either a segment of straight line of zigzag line.

\section{Notion of state}\label{state}

In this section we define what does it mean that two bodies are in the same external or internal state, rather than what the external or internal state of a body in fact is. If needed the notion of state can be in generally defined as follows. Since the relation ``to be in the same external state'' is an equivalence relation, the external states can be defined as equivalence classes of this relation. The same holds for the definition of internal state.

\subsection{External state of a body}\label{external}

A body interacting with other bodies exert influence on them and at the same time is also under their
influence. It is quite natural to describe such influences on the basis of the notion of a state of a body. Our definition of a state of a body takes into consideration the relative positioning of its elementary parts in the environment. The changes of relative positioning of elementary parts in a body can affect the body entirely or a particular part of it. This motivates the question how to measure the amount of state transition. Before the definition of the notion of a state we introduce the denotation for the measure $\tau = \tau_B(t)$ of state transition of a body $B$ with the flow of time $t$. A casual meaning of $\tau = \tau_B(t)$ is the 
``age'' of the body $B$ at the moment $t$. We call $\tau = \tau_B(t)$ the proper time of $B$.

Independently from the definition of $\tau = \tau_B(t)$, we introduce the velocity $w_B(t)$ of state transition of the body $B$ as $w_B(t)=\tau_B(t+1)-\tau_B(t)$. We call this value as the proper time velocity of $B$ at the moment of the absolute time $t$.

%
%Let us remind that for an elementary body $b$ we have assumed that $t=\tau_b(t)-\tau_b(0)+s_b(t)$, where $s_b(t)= \sum^{t}_{j=1} |x_b(j)-x_b(j-1)|$ and $\tau(t)$ is the proper time of $b$ at the moment $t\in \Zed$ of the absolute time.

\begin{definition}
For any body $B$ $w_B(t)=1 \Leftrightarrow \forall_{b \in B}w_b(t)=1$
\end{definition}

\begin{definition}
For any body $B$ $w_B(t)=0 \Leftrightarrow \forall_{b \in B}w_b(t)=0$
\end{definition}
From it follows that a body $B$ does not change its external state 
if all its elementary bodies do not change their external states. It means that two bodies are at the same external state if one can be transformed into another by 
isometric straight-line shifts in the environment applied to all its elementary parts.

\begin{theorem}
For any body $B$, if $|v_B(t)|=1$ then $w_B(t)=0$.
\end{theorem}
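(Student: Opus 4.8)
The plan is to push the statement about the composite body $B$ down to elementary facts about its elementary parts, using the two definitions of $w_B$ together with the arithmetic identity that links the proper time and the spatial displacement of a single elementary body.

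First I would recall, for an arbitrary elementary body $b$, the identity $1 = \tau_b(t+1) - \tau_b(t) + |x_b(t+1) - x_b(t)|$ established in Section~\ref{body}, which rewrites as $w_b(t) = 1 - |v_b(t)|$, and the fact that $v_b(t) \in \{-1,0,1\}$: a straight-line move sends $b(t) = x^i$ to $(x+i)^i$, so $v_b(t) = i = \pm 1$, whereas a turn sends it to $x^{-i}$, so $v_b(t) = 0$. Consequently $|v_b(t)| = 1$ is equivalent to $w_b(t) = 0$ (and $v_b(t) = 0$ is equivalent to $w_b(t) = 1$).

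Next, writing $B = \{1,\dots,n\}$ and using $x_B(t) = \frac{1}{n}\sum_{i=1}^n x_i(t)$, one gets $v_B(t) = \frac{1}{n}\sum_{i=1}^n v_i(t)$. Since each $v_i(t) \in \{-1,0,1\}$, the hypothesis $|v_B(t)| = 1$ forces $\left|\sum_{i=1}^n v_i(t)\right| = n$; a sum of $n$ numbers each of absolute value at most $1$ attains $\pm n$ only when all of them equal $+1$ or all equal $-1$ (equality case of the triangle inequality). In either case $|v_i(t)| = 1$, hence $w_i(t) = 0$, for every $i \in B$. Invoking the definition ``$w_B(t) = 0 \Leftrightarrow \forall_{b \in B} w_b(t) = 0$'' then yields $w_B(t) = 0$.

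I do not expect a genuine obstacle here. The only point needing a little care is the first step: confirming that the admissible per-step spatial velocities of an elementary body are exactly $-1,0,1$ and that the saturated values $\pm 1$ are precisely the ones making the elementary proper-time velocity vanish. Everything after that is the elementary remark that an average of values in the interval $[-1,1]$ has absolute value $1$ only in the degenerate all-equal case, combined with the defining equivalence for $w_B(t) = 0$.
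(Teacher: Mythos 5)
Your proof is correct and follows essentially the same route as the paper's one-line argument, which observes that maximal spatial velocity of the body forces maximal spatial velocity of every elementary part, precluding any external state change; you have simply made each step explicit (the identity $w_b(t)=1-|v_b(t)|$, the equality case of the triangle inequality for the average velocity, and the defining equivalence for $w_B(t)=0$). No gaps.
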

\begin{proof}
The statement follows from the fact that any change of the external state of a body 
is not possible in case of maximal spatial velocity of all its elementary parts.
\end{proof}

\subsection{Internal state of a body}

The notion of external state of a body allows to consider the bodies as an automata-like model of algorithms. 
But since two bodies with different absolute spatial velocities are definitely in different external states we can not speak of them as of realization of the same algorithm. For example there is no sense to ``ask'' a body to determine its absolute spatial velocity. 
However we would like to identify two bodies as the same algorithm even if they move with different spatial velocities.
It will be achieved by introduction of affine isomorphism of bodies through definition of inertial reference frame associated with a body so that the external state of a body will be presented as pair of components: spatial velocity of the body and its internal state that is spatial velocity invariant. 
The point of introducing the notion of inertial  reference frame 
associated with a body lies in the ability to consider other bodies in 
relation to the given one.
An example of inertial reference frame is the absolute reference frame $O$ associated with an immovable body $B$ such that for all $t\in\Zed$ $x_B(t)=0$, $v_B(t)=0$, $w_B(t)=1$, $\tau_B(0)=0$, and, hence, $\tau_B(t)=t$. Thus, the introduced notions of absolute time, absolute coordinate and absolute spatial velocity implicitly mean an absolutely motionless body in relation to which objects were considered. The reference frames associated with the bodies allow us to make these notions relative.

Let us denote (for a pair of bodies $A$ and $B$) by $x_{AB}(\tau_B)$, $v_{AB}(\tau_B)$, $w_{AB}(\tau_B)$ and $\tau_{AB}(\tau_B)$ the coordinate, the spatial velocity, the proper time velocity and the proper time of the body $A$ at the moment of time $\tau_{B}$ in the reference frame $O_B$ associated with the body $B$, respectively.
By definition we assume that $x_{BB}(\tau_B)\equiv 0$, $v_{BB}(\tau_B)\equiv 0$, $w_{BB}(\tau_B)\equiv 1$ and $\tau_{BB}(\tau_B)=\tau_B$. 

\begin{definition}
A body $B$ is called an inertial body if $v_B(t)$ and $w_B(t)$ are both constants.
\end{definition}

\begin{remark}\label{rem1}
It follows that $\tau_{AB}(\tau_B)=\tau_{AB}(0)+\tau_B\cdot w_{AB}$ and $x_{AB}(\tau_B)=x_{AB}(0)+\tau_B\cdot v_{AB}$ for inertial bodies $A$ and $B$.
\end{remark}

\begin{remark}
Further we consider only inertial bodies.
\end{remark}

In addition we assume that coordinates of the same events in different inertial reference frames are connected by affine mapping. For any bodies $A$ and $B$ let us denote by $L_{BA}:O_B\rightarrow O_A$ the affine mapping that connects $O_B$ and $O_A$ such that an event $(x,\tau_B)$ in $O_B$ coincides with the event $L_{BA}(x,\tau_B)$ in $O_A$.

These assumptions are sufficient to find out $L_{BA}$. Without loss of generality we assume that the origins of both reference frames $O_A$ and $O_B$ are the same: $x_{BA}(0)=0$ and $\tau_{BA}(0)=0$. Then the mapping $L_{BA}$ is linear. Let us work out the form of transformation matrix 
$L_{BA}=\left( \begin{array}{cc}
a_{11} & a_{12}  \\
a_{21} & a_{22} \end{array} \right)$. 

\begin{lemma}\label{lemma_LBA}
The mapping $L_{BA}$ either holds the directions of the vectors $\left(\begin{array}{c} 1\\1\end{array}\right)$
and $\left(\begin{array}{c} -1\\1\end{array}\right)$ 
(i.e. these vectors are eigenvectors of the mapping $L_{BA}$) 
or permutes their directions.
\end{lemma}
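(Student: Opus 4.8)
The plan is to reduce the lemma to a single claim: that $L_{BA}$ maps the pair of lines $\Real\binom{1}{1}$ and $\Real\binom{-1}{1}$ onto itself. Granting this, the dichotomy is immediate, because $L_{BA}$ is a linear bijection: the relation ``the event $(x,\tau_B)$ of $O_B$ coincides with the event $L_{BA}(x,\tau_B)$ of $O_A$'' is symmetric, so $L_{BA}^{-1}=L_{AB}$, and the reduction to the linear case (common origin of $O_A$ and $O_B$) has already been made. A linear bijection cannot send both lines into the same one of the two, so it either fixes $\Real\binom{1}{1}$ and $\Real\binom{-1}{1}$ individually — whence $\binom{1}{1}$ and $\binom{-1}{1}$ are eigenvectors of $L_{BA}$ — or it interchanges them.

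The geometric input is that, up to a nonzero scalar, $\binom{1}{1}$ and $\binom{-1}{1}$ are exactly the direction vectors in $O_B$ of the world lines of elementary bodies in uniform motion at maximal speed. So I would fix an elementary body $b$ moving with uniform spatial velocity $v=+1$ in $O_B$ (an admissible value); its world line in $O_B$ is a straight line of direction $\binom{1}{1}$. The map $L_{BA}$ carries this line to the world line of the same body $b$ as recorded in $O_A$, and, being linear, it sends straight lines to straight lines; hence $b$ moves uniformly in $O_A$ as well, and the only uniform spatial velocities available to an elementary body are $+1$, $-1$, $0$. Therefore $L_{BA}\binom{1}{1}$ is a nonzero multiple of $\binom{1}{1}$, of $\binom{-1}{1}$, or of $\binom{0}{1}$, and symmetrically for $L_{BA}\binom{-1}{1}$.

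The one substantive step is then to exclude $L_{BA}\binom{\pm1}{1}\parallel\binom{0}{1}$, i.e. that a body moving at the speed limit in $O_B$ is at rest in $O_A$. Here I would use that a reference frame is associated only with a body whose proper time advances, so $w_{AB}\neq0$; since any body satisfies $|v_{AB}|\leq1$ and Theorem~1 shows that $|v_{AB}|=1$ would force $w_{AB}=0$, this non-degeneracy gives $|v_{AB}|<1$, so $A$ does not move at maximal speed relative to $B$. If $b$, with $v_b=\pm1$ in $O_B$, were at rest in $O_A$, then in $O_A$ its world line would be parallel to that of $A$; applying the affine map $L_{AB}$, which preserves parallelism, $b$ and $A$ would have equal spatial velocities in $O_B$ as well, contradicting $v_b=\pm1\neq v_{AB}$. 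Hence each of $L_{BA}\binom{1}{1}$, $L_{BA}\binom{-1}{1}$ spans one of the two lines $\Real\binom{1}{1}$, $\Real\binom{-1}{1}$, and by injectivity not the same one, which proves the claim. I expect this last exclusion to be the main obstacle: it is the discrete-model counterpart of the invariance of the light cone, and the only place where a mild non-degeneracy assumption on the reference frames is used.
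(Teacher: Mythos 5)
Your argument reaches the right dichotomy, but it takes a genuinely different and more roundabout route than the paper, and the one step you yourself flag as the main obstacle is, as written, circular. The paper's proof is a single observation: the directions $\left(\begin{array}{c}1\\1\end{array}\right)$ and $\left(\begin{array}{c}-1\\1\end{array}\right)$ are exactly the directions of \emph{elementary moves}, i.e.\ of the straight segments of the continuous world lines of elementary bodies, and since these segments are ``material'' objects their two-element set of directions is the same in every reference frame; hence $L_{BA}$ carries this set onto itself, which is precisely ``fix both directions or swap them''. Note that the paper never has to consider the direction $\left(\begin{array}{c}0\\1\end{array}\right)$ at all: even an elementary body with uniform spatial velocity $0$ has a \emph{zigzag} continuous world line whose straight pieces all have slope $\pm 1$, so a vertical direction is never the direction of an elementary move. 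By working instead with the set of admissible uniform velocities $\{1,-1,0\}$ of whole world lines, you weaken the geometric input and thereby create the extra case $L_{BA}\left(\begin{array}{c}\pm 1\\1\end{array}\right)\parallel\left(\begin{array}{c}0\\1\end{array}\right)$ that you must then exclude. Both versions rest on the same unstated relativity postulate (that the kinematics of elementary bodies is described by the same rules in every inertial frame), so I do not fault you for that; but your route buys nothing over the paper's and costs an additional argument.

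The concrete problem is in that additional argument. To get $|v_{AB}|<1$ you apply the bound $|v|\le 1$ and the implication ``$|v|=1\Rightarrow w=0$'' to the \emph{relative} quantities $v_{AB}$, $w_{AB}$. The paper establishes these only for absolute velocities and proper-time velocities; their relative counterparts are consequences of the transformation formulas proved \emph{after} this lemma (e.g.\ Corollary~\ref{cor3} gives $w_{AB}\,w_{BA}=1-v_{AB}^{2}$, which is exactly what you are implicitly invoking). At the point where the lemma is needed, $v_{AB}$ and $w_{AB}$ are undefined beyond $v_{BB}\equiv 0$, $w_{BB}\equiv 1$, so this step begs the question. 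It is repairable: run your parallelism argument in the absolute frame instead. There the world line of $b$ is a genuine straight line of slope $\pm 1$, while the (average) world line of the inertial body $A$ has slope $v_A$ with $|v_A|<1$ (here the theorem ``$|v_B(t)|=1\Rightarrow w_B(t)=0$'' applies legitimately, together with the non-degeneracy requirement $w_A\neq 0$ for a body carrying a reference frame); two world lines that are not parallel in one affine coordinate system are not parallel in any, so $b$ cannot be at rest in $O_A$. With that substitution your proof is sound.
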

\begin{proof}
The directions of reference frame axes are imaginary directions in the event space. But the set of directions of the vectors $\left(\begin{array}{c} 1\\1\end{array}\right)$ and $\left(\begin{array}{c} -1\\1\end{array}\right)$ in the absolute reference frame corresponds to the directions of the ``real'' ``material'' world lines of elementary bodies by elementary moves and therefore this set of directions does not depend on reference frames. From it follows that this set of directions {\sl is invariant} by any affine transformation.
\end{proof}

\begin{corollary}
For the matrix $L_{BA}$ holds either $a_{11}=a_{22}$, $a_{12}=a_{21}$ or $a_{11}=-a_{22}$, $a_{12}=-a_{21}$.
\end{corollary}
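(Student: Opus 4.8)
The plan is to translate the geometric statement of Lemma~\ref{lemma_LBA} directly into linear-algebraic constraints on the entries $a_{11},a_{12},a_{21},a_{22}$. Lemma~\ref{lemma_LBA} gives us two cases, so I would simply handle each in turn and read off the claimed relations among the matrix entries.

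First, suppose $L_{BA}$ preserves the directions of $\left(\begin{array}{c}1\\1\end{array}\right)$ and $\left(\begin{array}{c}-1\\1\end{array}\right)$; that is, both are eigenvectors. Write $L_{BA}\left(\begin{array}{c}1\\1\end{array}\right)=\lambda\left(\begin{array}{c}1\\1\end{array}\right)$ and $L_{BA}\left(\begin{array}{c}-1\\1\end{array}\right)=\mu\left(\begin{array}{c}-1\\1\end{array}\right)$. The first equation yields $a_{11}+a_{12}=\lambda$ and $a_{21}+a_{22}=\lambda$, hence $a_{11}+a_{12}=a_{21}+a_{22}$. The second yields $-a_{11}+a_{12}=-\mu$ and $-a_{21}+a_{22}=\mu$, hence $-a_{11}+a_{12}=-(-a_{21}+a_{22})$, i.e. $a_{11}-a_{12}=a_{22}-a_{21}$, equivalently $a_{11}+a_{21}=a_{12}+a_{22}$. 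Adding and subtracting the two derived relations $a_{11}+a_{12}=a_{21}+a_{22}$ and $a_{11}-a_{12}=a_{22}-a_{21}$ gives $a_{11}=a_{22}$ and $a_{12}=a_{21}$.

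Second, suppose $L_{BA}$ permutes the two directions, so $L_{BA}\left(\begin{array}{c}1\\1\end{array}\right)=\lambda\left(\begin{array}{c}-1\\1\end{array}\right)$ and $L_{BA}\left(\begin{array}{c}-1\\1\end{array}\right)=\mu\left(\begin{array}{c}1\\1\end{array}\right)$. The same bookkeeping — $a_{11}+a_{12}=-\lambda$, $a_{21}+a_{22}=\lambda$, so $a_{11}+a_{12}=-(a_{21}+a_{22})$, and $-a_{11}+a_{12}=\mu$, $-a_{21}+a_{22}=\mu$, so $-a_{11}+a_{12}=-a_{21}+a_{22}$ — combines to give $a_{11}=-a_{22}$ and $a_{12}=-a_{21}$. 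This exhausts the cases and proves the corollary.

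There is essentially no obstacle here: the work is purely mechanical, two linear systems of two equations each. The only point requiring a little care is the sign bookkeeping in the second (permuting) case, and the implicit assumption — already guaranteed by the setup, since $L_{BA}$ is an invertible affine map and the eigen/permuted vectors are linearly independent — that the scalars $\lambda,\mu$ are nonzero, so that ``preserves the direction of'' genuinely forces the eigenvector equations rather than collapsing a vector to zero.
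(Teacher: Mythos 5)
Your proposal is correct and follows exactly the route the paper intends: the paper's own proof simply asserts that the corollary follows from Lemma~\ref{lemma_LBA} by ``straightforward calculations,'' and your two-case eigenvector/permutation computation is precisely that calculation, carried out correctly (including the sign bookkeeping in the permuting case). Nothing further is needed.
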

\begin{proof}
Based on the lemma~\ref{lemma_LBA} the corollary statement follows as a result of straightforward calculations.
\end{proof}

\begin{definition}
If $a_{11}=a_{22}$ and $a_{12}=a_{21}$ holds, then the reference frames $O_A$ and $O_B$ are said to be in standard configuration. If $a_{11}=-a_{22}$ and $a_{12}=-a_{21}$ holds, then the reference frames $O_A$ and $O_B$ are said to be in symmetric configuration.
\end{definition}
%\begin{lemma}
%\end{theorem}
%\begin{proof}
%\end{proof}

\begin{theorem}\label{theorem1} In the standard configuration the following holds
$$L_{BA}=
\left( \begin{array}{cc}
1/w_{BA} & v_{BA}/w_{BA}  \\
v_{BA}/w_{BA} & 1/w_{BA} \end{array} \right).
$$
\end{theorem}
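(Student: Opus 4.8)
The plan is to determine the two remaining free entries of $L_{BA}$ by following a single world line — that of the body $B$ itself — written simultaneously in the two reference frames $O_B$ and $O_A$. By the Corollary to Lemma~\ref{lemma_LBA} and the definition of the standard configuration we already have $a_{21}=a_{12}$ and $a_{22}=a_{11}$, so $L_{BA}$ has the form $\left(\begin{array}{cc} a_{11} & a_{12}\\ a_{12} & a_{11}\end{array}\right)$; and since the origins of $O_A$ and $O_B$ have been normalized to coincide, the map $L_{BA}$ is linear. It therefore suffices to compute the image under $L_{BA}$ of the single column vector $\left(\begin{array}{c}0\\1\end{array}\right)$, which is the second column $(a_{12},a_{11})$ of the matrix.

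In $O_B$ the body $B$ is at rest, $x_{BB}(\tau_B)\equiv 0$, and its own proper time coincides with the time axis of $O_B$, $\tau_{BB}(\tau_B)=\tau_B$; hence the event on the world line of $B$ at which the proper time (the ``age'') of $B$ equals $\tau_B$ is the point $(0,\tau_B)$ of $O_B$. In $O_A$, since $A$ and $B$ are inertial and $x_{BA}(0)=\tau_{BA}(0)=0$, Remark~\ref{rem1} gives $x_{BA}(\tau_A)=\tau_A\cdot v_{BA}$ and $\tau_{BA}(\tau_A)=\tau_A\cdot w_{BA}$. Consequently, the event at which the age of $B$ equals $\tau_B$ is recorded in $O_A$ at the time $\tau_A$ with $\tau_A\cdot w_{BA}=\tau_B$, that is $\tau_A=\tau_B/w_{BA}$, and at the spatial coordinate $x_{BA}(\tau_A)=\tau_B\cdot v_{BA}/w_{BA}$; so in $O_A$ this event is the point $(\tau_B v_{BA}/w_{BA},\ \tau_B/w_{BA})$.

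The proper time of $B$ is an intrinsic attribute of $B$, so it labels the points of $B$'s world line in a way independent of the frame used to describe them; therefore the point $(0,\tau_B)$ of $O_B$ and the point $(\tau_B v_{BA}/w_{BA},\ \tau_B/w_{BA})$ of $O_A$ denote one and the same event. Hence $L_{BA}$ maps $(0,\tau_B)$ to $(\tau_B v_{BA}/w_{BA},\ \tau_B/w_{BA})$, and taking $\tau_B=1$ reads off $a_{12}=v_{BA}/w_{BA}$ and $a_{11}=1/w_{BA}$. Combined with $a_{21}=a_{12}$ and $a_{22}=a_{11}$ from the standard configuration, this yields exactly the asserted form of $L_{BA}$.

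The one step that needs explicit justification rather than routine computation is the appeal to frame-independence of $B$'s proper time: it is precisely this convention — that the ``age'' of a body is carried along its world line intrinsically — that supplies the extra relation pinning $a_{11}$ to $1/w_{BA}$ rather than to an arbitrary scale, everything else being substitution into Remark~\ref{rem1}. Alternatively one could fix $a_{11}$ by imposing $L_{AB}=L_{BA}^{-1}$ together with the velocity relation obtained by tracing the world line of $A$ in $O_B$, but the proper-time argument above is the most direct.
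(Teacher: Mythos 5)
Your proposal is correct and follows essentially the same route as the paper's own proof: both identify the event on $B$'s world line at a given proper time in the two frames, use $x_{BB}\equiv 0$ together with $x_{BA}(\tau_A)=v_{BA}\tau_A$ and $\tau_{BA}(\tau_A)=w_{BA}\tau_A$ to compute $L_{BA}\left(\begin{array}{c}0\\1\end{array}\right)=\left(\begin{array}{c}v_{BA}/w_{BA}\\1/w_{BA}\end{array}\right)$, and then fill in the remaining entries from the standard-configuration symmetry. Your added remarks on the frame-independence of proper time simply make explicit an assumption the paper leaves implicit.
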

\begin{proof}

Since 
$L_{BA} \cdot 
\left( \begin{array}{c}
x_{BB}(\tau_{BA}(\tau_A))  \\
\tau_{BA}(\tau_A)
\end{array} \right)
=
\left(\begin{array}{c}
x_{BA}(\tau_A)\\
\tau_A\end{array}\right)
$, 
$x_{BA}(\tau_A)=v_{BA}(\tau_A)\cdot\tau_A$, 
$\tau_{BA}(\tau_A)=w_{BA}(\tau_A)\cdot\tau_A$,
$x_{BB}(\tau_B) \equiv 0$, 
 then 
$L_{BA} \cdot 
\left( \begin{array}{c}
0  \\
1
\end{array} \right)
=
\left( \begin{array}{c}
v_{BA}/w_{BA}  \\
1/w_{BA} \end{array} \right)
$.
From it follows that $a_{12}=v_{BA}/w_{BA}$ and $a_{22}=1/w_{BA}$.
\end{proof}

\begin{corollary}
In the symmetric configuration the following holds
$$L_{BA}=
\left( \begin{array}{cc}
-{1/w_{BA}} & v_{BA}/w_{BA}  \\
-{v_{BA}/w_{BA}} & 1/w_{BA} \end{array} \right) .
$$
\end{corollary}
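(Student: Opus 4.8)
The plan is to mirror the proof of Theorem~\ref{theorem1} almost verbatim, adjusting the signs dictated by the symmetric configuration. Recall from the corollary to Lemma~\ref{lemma_LBA} that in the symmetric configuration we have $a_{11}=-a_{22}$ and $a_{12}=-a_{21}$, so the matrix $L_{BA}$ is determined by its second column $(a_{12},a_{22})^{T}$ alone, exactly as in the standard case. Thus it suffices to compute $L_{BA}\cdot(0,1)^{T}$.

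First I would write down the same three facts used in Theorem~\ref{theorem1}: that $L_{BA}$ maps the event $(x_{BB}(\tau_{BA}(\tau_A)),\tau_{BA}(\tau_A))^{T}$ to $(x_{BA}(\tau_A),\tau_A)^{T}$; that $x_{BA}(\tau_A)=v_{BA}\cdot\tau_A$ and $\tau_{BA}(\tau_A)=w_{BA}\cdot\tau_A$ for inertial bodies (Remark~\ref{rem1}, with the origins normalized as assumed before Lemma~\ref{lemma_LBA}); and that $x_{BB}(\tau_B)\equiv 0$ by definition. Substituting and taking $\tau_A=1/w_{BA}$ so that the preimage becomes $(0,1)^{T}$, I get $L_{BA}\cdot(0,1)^{T}=(v_{BA}/w_{BA},\,1/w_{BA})^{T}$, hence $a_{12}=v_{BA}/w_{BA}$ and $a_{22}=1/w_{BA}$. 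Then the symmetric-configuration relations $a_{11}=-a_{22}$ and $a_{21}=-a_{12}$ immediately give $a_{11}=-1/w_{BA}$ and $a_{21}=-v_{BA}/w_{BA}$, which is precisely the claimed matrix.

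Since the corollary explicitly says ``Based on the lemma'' for the previous one and ``straightforward calculations'' for another, I expect the intended proof here is simply the remark that this follows from Theorem~\ref{theorem1} together with the sign change $a_{11}=-a_{22}$, $a_{12}=-a_{21}$; there is essentially no obstacle. The only point requiring a moment's care is that the second column of $L_{BA}$ is still computed the same way as in the standard configuration — the symmetric relations do not interfere with the derivation of $a_{12}$ and $a_{22}$, they only fix the first column afterwards — so one should check that the argument of Theorem~\ref{theorem1} used nothing about the first column. It did not, so the computation carries over unchanged.
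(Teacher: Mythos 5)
Your proposal is correct and is exactly the argument the paper intends: the paper's proof of this corollary is literally the one-line remark that it is ``analogous to that of Theorem~\ref{theorem1}'', and your write-up just spells out that analogy — compute the second column from the image of $(0,1)^{T}$ as in Theorem~\ref{theorem1}, then fix the first column via $a_{11}=-a_{22}$, $a_{12}=-a_{21}$. No gaps.
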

\begin{proof}
A proof is analogous to that of Theorem~\ref{theorem1}.
\end{proof}

\begin{corollary}
Let $v_{AB}=0$. By symmetric configuration the space axis of the reference frames $O_A$ and $O_B$ are in opposite directions, by standard configurations they are in the same directions.
\end{corollary}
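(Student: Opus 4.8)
The plan is to split the argument in two: first reduce the hypothesis $v_{AB}=0$ to the equivalent statement $v_{BA}=0$, and then simply substitute $v_{BA}=0$ into the explicit forms of $L_{BA}$ already obtained in Theorem~\ref{theorem1} (standard configuration) and in the corollary giving the symmetric-configuration form of $L_{BA}$.

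For the reduction, I would argue geometrically. With the common-origin normalization ($x_{BA}(0)=0$, $\tau_{BA}(0)=0$) the map $L_{AB}=L_{BA}^{-1}$ is linear. The world line of $A$ in its own frame $O_A$ is the time axis of $O_A$, and $L_{AB}$ carries it to the world line of $A$ in $O_B$. The condition $v_{AB}=0$ says precisely that this image world line is parallel to the time axis of $O_B$; since it also passes through the common origin, it coincides with the time axis of $O_B$. Hence $L_{AB}$, and therefore $L_{BA}$, maps the time axis of one frame onto the time axis of the other; applied to $B$ this says that the world line of $B$ in $O_A$ is the time axis of $O_A$, i.e. $v_{BA}=0$.

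For the substitution step, note first that $L_{BA}$ is an affine isomorphism of reference frames, so its matrix is invertible and $w_{BA}\neq 0$; as a proper time velocity $w_{BA}\ge 0$, hence $w_{BA}>0$, and in particular $1/w_{BA}>0$. Putting $v_{BA}=0$ into Theorem~\ref{theorem1} gives $L_{BA}=\mathrm{diag}(1/w_{BA},\,1/w_{BA})$, a positive scalar multiple of the identity, so the image $L_{BA}(1,0)^{T}=(1/w_{BA},0)^{T}$ of the space axis of $O_B$ is a positive multiple of the space axis $(1,0)^{T}$ of $O_A$: same direction. Putting $v_{BA}=0$ into the symmetric-configuration corollary gives $L_{BA}=\mathrm{diag}(-1/w_{BA},\,1/w_{BA})$, so $L_{BA}(1,0)^{T}=(-1/w_{BA},0)^{T}$ is a negative multiple of the space axis of $O_A$: opposite direction. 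Comparing the axes as oriented rays (up to positive rescaling) yields exactly the claim.

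The main obstacle is the first part, the passage from $v_{AB}=0$ to $v_{BA}=0$, since it is the only step that is not a mechanical substitution: it rests on reading velocities as slopes of world lines in the event space and on the linearity of $L_{BA}$ forced by the common-origin convention. One should also be a little careful about what ``space axis'' denotes and about the sign of $w_{BA}$, so that ``same direction'' and ``opposite direction'' are genuine statements about oriented rays rather than merely about the underlying lines.
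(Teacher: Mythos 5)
The paper states this corollary with no proof at all, so there is nothing to compare against; your argument is a correct way to fill that gap. The substitution step is exactly what the statement implicitly relies on (set $v_{BA}=0$ in Theorem~\ref{theorem1} and in the symmetric-configuration corollary and read off the sign of the image of $(1,0)^{T}$ under $L_{BA}$), and your care about $w_{BA}>0$ so that ``same/opposite direction'' is a statement about oriented rays is a genuine improvement on the paper's level of precision. The one place where you had to supply an idea is the passage from $v_{AB}=0$ to $v_{BA}=0$: the paper's own tool for this, Corollary~\ref{cor3} ($v_{AB}=-v_{BA}$), appears only \emph{after} this statement and is derived under the standing ``standard configuration'' convention, so your independent geometric argument (world lines through the common origin, $L_{AB}=L_{BA}^{-1}$ carrying time axis to time axis) is both valid and better ordered logically. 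A slightly more mechanical alternative with the same effect: Theorem~\ref{theorem1} with the roles of $A$ and $B$ exchanged gives $L_{AB}$ with off-diagonal entries $v_{AB}/w_{AB}$, so $v_{AB}=0$ makes $L_{AB}$ diagonal, hence $L_{BA}=L_{AB}^{-1}$ is diagonal and $v_{BA}/w_{BA}=0$; this works verbatim in the symmetric configuration as well, since the inverse of a matrix of symmetric-configuration form is again of that form.
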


Further for the sake of convenience we consider reference frames only in the standard configuration.
The following corollaries hold for any intertial bodies $A$, $B$, $C$.

\begin{corollary}\label{cor3}
It holds $v_{AB}=-v_{BA}$ and $w_{AB}\cdot w_{BA}=1-v_{AB}^2=1-v_{BA}^2.$
\end{corollary}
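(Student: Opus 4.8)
The plan is to use that $L_{AB}$, the affine map identifying events of $O_A$ with events of $O_B$, is inverse to $L_{BA}$: composing the passage from $O_B$ to $O_A$ with the passage back from $O_A$ to $O_B$ leaves every event unchanged, so $L_{AB}\circ L_{BA}=\mathrm{id}$. Working, by the standing convention, in the standard configuration, both $L_{BA}$ and $L_{AB}$ have the explicit form supplied by Theorem~\ref{theorem1}, so this composition identity becomes a product of two $2\times 2$ matrices that must equal the identity matrix, and the corollary will drop out by reading off two of the entries.

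Concretely, I would multiply
$$\left(\begin{array}{cc} 1/w_{AB} & v_{AB}/w_{AB}\\ v_{AB}/w_{AB} & 1/w_{AB}\end{array}\right)\left(\begin{array}{cc} 1/w_{BA} & v_{BA}/w_{BA}\\ v_{BA}/w_{BA} & 1/w_{BA}\end{array}\right)=\left(\begin{array}{cc}1&0\\0&1\end{array}\right).$$
The off-diagonal entry of the product gives $(v_{AB}+v_{BA})/(w_{AB}w_{BA})=0$, hence $v_{AB}=-v_{BA}$; the diagonal entry gives $(1+v_{AB}v_{BA})/(w_{AB}w_{BA})=1$, hence $w_{AB}\cdot w_{BA}=1+v_{AB}v_{BA}=1-v_{AB}^2$, and since $v_{BA}^2=v_{AB}^2$ this value also equals $1-v_{BA}^2$, which is exactly the claimed pair of identities. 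An equivalent route is to compute $L_{BA}^{-1}$ directly via $\det L_{BA}=(1-v_{BA}^2)/w_{BA}^2$ and match it entry by entry with the standard form of $L_{AB}$; this gives the same two scalar equations.

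The only delicate points are the two facts fed into the computation: that $L_{AB}=L_{BA}^{-1}$, which is immediate from the definition of these maps as identifications of one and the same physical event in the two frames, and that $L_{AB}$ is again of the standard form of Theorem~\ref{theorem1}, which is our standing convention and in particular guarantees $w_{AB}\neq 0$ so that the matrices are well defined (the degenerate case $|v|=1$, where $w=0$ by the first Theorem, is excluded). Granting these, the argument is just the short matrix identity above; I expect the only thing that requires care to be the bookkeeping of which body plays the role of $A$ and which the role of $B$ when instantiating Theorem~\ref{theorem1} for $L_{AB}$ versus $L_{BA}$.
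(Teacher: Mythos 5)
Your proof is correct and is exactly the paper's argument: the paper's one-line proof derives both identities from $L_{AB}\cdot L_{BA}=I$ using the matrix form of Theorem~\ref{theorem1}, and your computation simply carries out the entry-by-entry multiplication that this requires. No issues.
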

\begin{proof}
The equalities can be derived from $L_{AB}\cdot L_{BA}= 
\left( \begin{array}{cc}
1 & 0  \\
0 & 1 \end{array} \right)
$.
\end{proof}

\begin{corollary}\label{cor_addition}
(velocity-addition formula)
$v_{CA}=\frac{v_{BA}+v_{CB}}{1+v_{BA}v_{CB}}$.
\end{corollary}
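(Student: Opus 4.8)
The plan is to use the fact that the family of affine maps $L_{XY}$ behaves functorially: identifying an event across reference frames is transitive, so travelling from $O_C$ to $O_B$ and then from $O_B$ to $O_A$ must land on the same event as travelling directly from $O_C$ to $O_A$. Hence one has the matrix identity $L_{BA}\cdot L_{CB}=L_{CA}$. First I would record this composition law, observing that it is compatible with the normalization of origins ($x_{BB}\equiv 0$, $\tau_{BB}(\tau_B)=\tau_B$) and that the special case $C=A$ recovers the relation $L_{BA}\cdot L_{AB}=I$ already exploited in Corollary~\ref{cor3}.

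Next, since we have agreed to work only in the standard configuration, I would substitute the explicit form supplied by Theorem~\ref{theorem1} for each of the three maps $L_{BA}$, $L_{CB}$, $L_{CA}$ and carry out the $2\times 2$ multiplication on the left-hand side. The $(1,1)$ entry of $L_{BA}\cdot L_{CB}$ comes out as $\dfrac{1+v_{BA}v_{CB}}{w_{BA}w_{CB}}$ and the $(1,2)$ entry as $\dfrac{v_{BA}+v_{CB}}{w_{BA}w_{CB}}$ (the matrices being symmetric, the second row yields nothing new). Equating these with the corresponding entries $1/w_{CA}$ and $v_{CA}/w_{CA}$ of $L_{CA}$ gives two scalar equations, and dividing the second by the first cancels all the proper-time velocities and produces exactly $v_{CA}=\dfrac{v_{BA}+v_{CB}}{1+v_{BA}v_{CB}}$; as a byproduct one simultaneously reads off the companion identity $w_{CA}=\dfrac{w_{BA}w_{CB}}{1+v_{BA}v_{CB}}$.

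The one point that genuinely needs attention — the main obstacle — is justifying that all three reference frames may be taken in standard configuration at once, i.e.\ that composing two standard-configuration transformations again yields a transformation of standard-configuration type (so that $L_{CA}$ really has the form given by Theorem~\ref{theorem1}). This is, however, immediate from the shape of the matrices in Theorem~\ref{theorem1}: a product of two matrices of the form $\left(\begin{array}{cc} a & b \\ b & a\end{array}\right)$ is again of that form, so the corollary is consistent with the convention adopted just before its statement. Everything else is the routine matrix arithmetic sketched above.
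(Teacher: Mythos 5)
Your proposal is correct and follows exactly the paper's route: the paper's entire proof is the one line that the formula ``is derived from the equation $L_{CA}=L_{BA}\cdot L_{CB}$,'' and you have simply carried out the matrix multiplication (correctly) and the division of entries that this line leaves implicit. The closure observation about symmetric matrices and the companion identity for $w_{CA}$ are sensible additions but do not change the approach.
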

\begin{proof}
This velocity-addition formula is derived from the equation $L_{CA}=L_{BA}\cdot L_{CB}$.
\end{proof}

\begin{corollary}\label{corollary1}
(``length contraction/extension'' formula)
Given inertial bodies $A$, $B$ and $C$ such that $v_{AC}=v_{BC}$. Let $\Delta x=\left|x_{AA}(\tau_A)-x_{BA}(\tau_A)\right|$ be the distance between $A$ and $B$ in the reference frame $O_A$. Let $\Delta x'=\left|x_{AC}(\tau_C)-x_{BC}(\tau_C)\right|$ be the distance between $A$ and $B$ in the reference frame $O_C$, then $\Delta x'=w_{CA}\cdot \Delta x$.
\end{corollary}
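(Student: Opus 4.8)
The plan is to obtain $\Delta x'$ from $\Delta x$ by applying the affine map $L_{CA}$ of Theorem~\ref{theorem1} to a purely spatial segment. First I would observe that the hypothesis $v_{AC}=v_{BC}$ is precisely what makes both distances well defined. From Corollary~\ref{cor3} we have $v_{CA}=-v_{AC}$ and $v_{CB}=-v_{BC}$, so $v_{CA}=v_{CB}$; feeding this into the velocity-addition formula (Corollary~\ref{cor_addition}) written as $v_{CA}=\frac{v_{BA}+v_{CB}}{1+v_{BA}v_{CB}}$ gives, after clearing denominators, $v_{BA}(v_{CB}^2-1)=0$. Hence, away from the degenerate case $|v_{CB}|=1$, we get $v_{BA}=0$: the bodies $A$ and $B$ are mutually at rest, so in $O_A$ their world lines are the vertical lines $x=0$ and $x=x_{BA}$ with $x_{BA}$ constant, and in $O_C$ they are parallel lines of common slope. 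In particular neither $\Delta x$ nor $\Delta x'$ depends on the chosen instant.

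Next I would fix an instant $\tau_C$ and consider the two events $E_A=\left(\begin{array}{c}x_{AC}(\tau_C)\\ \tau_C\end{array}\right)$ and $E_B=\left(\begin{array}{c}x_{BC}(\tau_C)\\ \tau_C\end{array}\right)$ lying on the world lines of $A$ and $B$ at that instant in $O_C$. Their difference is the purely spatial vector $E_B-E_A=\left(\begin{array}{c}\pm\Delta x'\\ 0\end{array}\right)$. Since $L_{CA}$ is affine, $L_{CA}(E_B)-L_{CA}(E_A)$ equals the homogeneous part of $L_{CA}$ applied to $E_B-E_A$; because the time component of $E_B-E_A$ vanishes, the off-diagonal entry $v_{CA}/w_{CA}$ of the matrix in Theorem~\ref{theorem1} drops out and the spatial component of this image is $\pm\Delta x'/w_{CA}$. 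On the other hand $L_{CA}(E_A)$ and $L_{CA}(E_B)$ are just the same two events expressed in $O_A$, hence they sit on the world lines of $A$ and $B$ in $O_A$, whose spatial coordinates are the constants $0$ and $x_{BA}$; so the spatial component of $L_{CA}(E_B)-L_{CA}(E_A)$ is $x_{BA}-0$, of absolute value $\Delta x$. Equating absolute values gives $\Delta x=\Delta x'/w_{CA}$, i.e. $\Delta x'=w_{CA}\cdot\Delta x$ (using $w_{CA}>0$).

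The parts I expect to be the main obstacle are the two conceptual ones rather than the algebra. The statement tacitly presupposes that $A$ and $B$ are mutually at rest, so one must first extract $v_{BA}=0$ from $v_{AC}=v_{BC}$ and decide how to treat the light-cone boundary $|v_{CB}|=1$ (either excluded, or dismissed as not genuinely inertial in the relevant sense). The second delicate point is bookkeeping of \emph{which} pairs of events on the two world lines are being compared: the factor $w_{CA}$ appears exactly because ``same $\tau_C$'' and ``same $\tau_A$'' select different, non-parallel spatial slices, and $L_{CA}$ carries one slice to the other while scaling spatial components by $1/w_{CA}$. By contrast the affine (translation) part of $L_{CA}$ is irrelevant here, since only differences of events enter, and the sign ambiguities are harmless because everything is wrapped in absolute values.
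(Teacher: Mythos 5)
Your proof is correct and follows essentially the same route as the paper: both apply $L_{CA}$ (via Theorem~\ref{theorem1}) to the purely spatial separation of $A$ and $B$ at a fixed $\tau_C$ and read off the constant separation on their world lines in $O_A$, the paper merely normalizing the origins so that this separation becomes the single event $(\Delta x',0)$. The one thing you add is the explicit derivation of $v_{BA}=0$ from $v_{AC}=v_{BC}$ via the velocity-addition formula, which the paper leaves implicit in its remark that $\Delta x$ and $\Delta x'$ are constants.
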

\begin{figure}
\begin{center}
\includegraphics[scale=0.6]{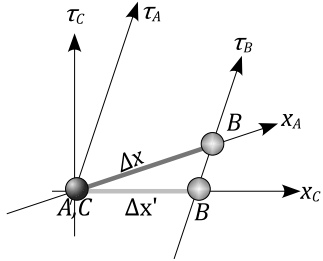}
\caption{Distances between two bodies $A$ and $B$ in Corollary~\ref{corollary1}}
\end{center}
\end{figure}
\begin{proof}
Notice that the values of $\Delta x$ and $\Delta x'$ are constants. Without loss of generality we assume $\tau_{AC}(0)=\tau_{BC}(0)=0$, $x_{AC}(0)=0$, $v_{AC} \geq 0$, $x_{BA} \geq 0$. Then $x_{BA}(\tau_A)\equiv \Delta x$ and $x_{BC}(0)=\Delta x'$. Let $\tau_A$ be such a moment of time that the events $(x_{BC}(0),0)=(\Delta x',0)$ and $(x_{BA}(\tau_A),\tau_A)=(\Delta x,\tau_A)$ are the same. Then the formula $\Delta x'=w_{CA}\cdot \Delta x$ of 
``length contraction'' follows from $L_{CA}\cdot\left(\begin{array}{c} \Delta x' \\ 0\end{array} \right)=\left(\begin{array}{c} \Delta x \\ \tau_A\end{array} \right)$ and Theorem~\ref{theorem1}.
\end{proof}

As it will be seen, from the example at the end of this section, $w_{CA}$  may take on a value which is 
less than 1 as well as more than 1. So it means that in our discrete model 
we have contracting length as well as extending length in respect to 
different inertial frame system.

Now we give a definition of internal state of a body. Let for bodies $A$ and $B$ there be a bijection $\phi:A\rightarrow B$ such that for all $b\in A$ elementary bodies $b$ and $\phi(b)$ are isomorphic. 
We say that $A$ at the moment of proper time $\tau_A$ and $B$ at the moment of proper time $\tau_B$ are affine isomorphic iff $\{(\phi(b),x_{bA}(\tau_A)|b\in A\}$=$\{(b,x_{bB}(\tau_B)|b\in B\}$. 
\begin{definition}
Two inertial bodies are in the same internal state at some moments of their proper time iff they are affine isomorphic at their respective proper time.
\end{definition}

Internal state of an inertial body does not depend on its spatial velocity in the absolute reference frame. Thus, the external state of an inertial body can be seen as a combination of two components: the spatial velocity of the body and its internal state.

In order to illustrate the concept of affine isomorphism let us consider bodies $A_1$ and $A_2$ from the Examples~\ref{example1} and~\ref{example2}. This bodies are affine isomorphic. The corresponding transformation of the reference frame $O_2$ of $A_2$ to $O_1$ of $A_1$ is:

\[ \left( \begin{array}{c}
x'  \\
t' \end{array} \right)
=
\left( \begin{array}{cc}
\frac{3}{2} & \frac{1}{2}  \\
\frac{1}{2} & \frac{3}{2} \end{array} \right)
\left( \begin{array}{c}
x \\
t \end{array} \right)
-
\left( \begin{array}{c}
\frac{1}{2}  \\
\frac{1}{2} \end{array} \right).\]

The dynamics of the bodies and illustration of the transformation are shown on the Figure~\ref{F_ex2}. From the value of transformation matrix and Corollary~\ref{cor3} it follows that $v_{21}=-v_{12}=1/3$ $w_{21}=2/3$, $w_{12}=4/3$.
\begin{figure}
\begin{center}
\includegraphics[scale=0.4]{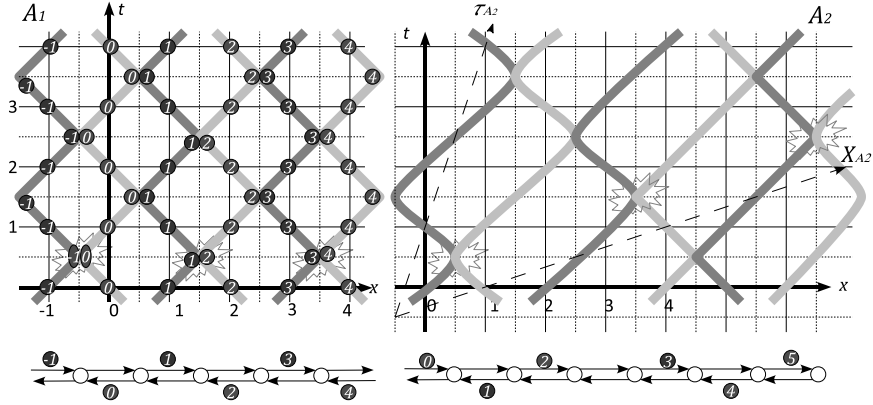}
\caption{The time-space diagrams for the collectives of automata from
Examples~\ref{example1} and~\ref{example2}. }\label{F_ex2}
\end{center}
\end{figure}

\section{Final Remarks and Conclusion}\label{conclusion}

Let us compare the obtained results with formulas of special relativity theory.
It is interesting to have a look, from our model viewpoint, at two equations $\Delta t'={\Delta t}/\sqrt{1-(v/c)^2}$ of time dilation and $\Delta x'={\Delta x}\cdot\sqrt{1-(v/c)^2}$ of length contraction of the special relativity theory. Drawing a proper analogy between them and $\tau_{AC}(\tau_C)-\tau_{AC}(0)=w_{AC}\cdot\tau_C$ (Remark~\ref{rem1}) and $\Delta x'=w_{CA}\cdot \Delta x$ (Corollary~\ref{corollary1}) respectively we can see, due to generally asymmetry $w_{AC}\neq w_{CA}$ in our discrete virtual ``world'', that the coefficient $1/\gamma=\sqrt{1-(v/c)^2}$ reciprocal to Lorentz factor $\gamma$ has different ``physical'' meanings in these formulas. The factor $1/\gamma$ has in the first equations a meaning of the coefficient $w_{AC}$ and in the second equations has a meaning of the coefficient $w_{CA}$ if we consider a ``moving'' $A$ with respect to a ``rest'' $C$.

We would like to position this paper as an introductory research work on
fundamental notion of a state for distributed automata object and to draw attention
to the number of problems related to such notion. It is shown that
a measure of state transition of such object  can be described
by the language of internal and external state changes. 
We hope that the proposed analogies between automata theory and relativity theory 
can generate further interest to the topic towards a better understanding of
such analogy.

Apart from the study of the notion of state 
somebody can ask a number of more technical questions 
which were not intention of this work, but nevertheless 
are important research issues.
In particular, it was not considered what kind of values the transformation matrix of $L_{AB}$ basically can have. 
Also the algorithmic universality of model is not proved, though a proof of this fact simply enough by simulation of cellular automata.  It will be interesting to consider the model in higher dimensions and the case of not inertial bodies and inhomogeneous environment.
At the same time various problems are unsolvable in the given model because of peculiarity of the model, e.g. like the question whether a body can define its absolute velocity. This seemingly natural question is meaningless in the considered model and therefore is an algorithmically unsolvable problem in it.  These and a number of other questions will be considered in the future publications of authors. \\

{\bf Acknowledgements}: 
The author acknowledges the useful discussions on this work with Dr. Valeriy Kozlovskyy
and would like to thank for his valuable comments that  helped us to improve the presentation. 
The work of authors was supported in part by  NATO Collaborative Linkage Grant 983162.

%% The Appendices part is started with the command \appendix;
%% appendix sections are then done as normal sections
%% \appendix

%% \section{}
%% \label{}

%% References
%%
%% Following citation commands can be used in the body text:
%% Usage of \cite is as follows:
%%   \cite{key}         ==>>  [#]
%%   \cite[chap. 2]{key} ==>> [#, chap. 2]
%% 

%% References with bibTeX database:

\bibliographystyle{elsarticle-num}
%\bibliography{<your-bib-database>}

\begin{thebibliography}{20}

%% \bibitem must have the following form:
%%   \bibitem{key}...
%%

% \bibitem{}

\bibitem{11} I.~S.~Grunskyy, A.~N.~Kurgansky, Dynamics of collective of automata in discrete environment // Tr. Inst. Prikl. Mat. Mekh, 15, 50--56 (2007) (in Russian).
\bibitem{5} G.~Kilibarda, V.~B.~Kudryavtsev, $\check{S}$.~U$\check{s}$$\acute{c}$umli$\acute{c}$, Collectives of automata in labyrinths, Discrete Mathematics and Applications, - 13(5):429-466, (2003). 
\bibitem{6} E.~Yu.~Kondrashina, E.~V.~Litvintseva, and D.~A.~Pospelov, Representation of the Knowledge of Time and Space in Intelligent Systems, Moscow: Nauka, 1989 (in Russian).
\bibitem{13} O.~Kurganskyy, Dynamics of a ``body'' in information environment, The 10th International Conference ``Stability, Control and Rigid Bodies Dynamics'' (ICSCD'08). - Donetsk, Ukraine, IAMM NASU, 2008, p.59. 
\bibitem{1} H.~Poincar$\acute{e}$, La Science et l'hypoth$\grave{e}$se (1902).
\bibitem{2} H.~Poincar$\acute{e}$, La valeur de la science (1905).
\bibitem{3} H.~Poincar$\acute{e}$, Science et m$\acute{e}$thode (1908).
\bibitem{4} H.~Poincar$\acute{e}$, Derni$\grave{e}$res pens$\acute{e}$es (1913).
\bibitem{8} M.~L.~Tsetlin, Automaton Theory and Modeling of Biological Systems, New York: Academic Press, 1973.
\bibitem{9} Z.~D.~Usmanov, Modelirovanie vremeni (Modelling of Time), Moscow: Znanie, 1991 (in Russian).
\bibitem{10} V.~I.~Varshawski, Collective behaviour of automata, Moscow: Nauka, 1973 (in Russian). 



\end{thebibliography}

%% Authors are advised to submit their bibtex database files. They are
%% requested to list a bibtex style file in the manuscript if they do
%% not want to use elsarticle-num.bst.

%% References without bibTeX database:

\end{document}